\newtheorem{theorem}{Theorem}[section]
\newtheorem{proposition}[theorem]{Proposition}
\theoremstyle{definition}
\newtheorem{example}[theorem]{Example}
\newtheorem{remark}[theorem]{Remark}
\newcommand{\Z}{{\mathbb Z}}
\newcommand{\ok}{{\rm{\bf k}}}
\newcommand{\OK}{{\rm{\bf K}}}
\newcommand{\am}{{\rm{\bf a}}^{\!-} }
\newcommand{\ap}{{\rm{\bf a}}^{\!+} }
\newcommand{\Am}{{\rm{\bf A}}^{\!\!-} }
\newcommand{\Ap}{{\rm{\bf A}}^{\!\!+} }
\begin{document}

\title[3D Reflection equation]
{A solution of the 3D reflection equation from\\
quantized algebra of functions of type $B$}

\author{Atsuo Kuniba}
\address{Institute of Physics\\
Graduate School of Arts and Sciences University of Tokyo\\
Komaba, Tokyo 153-8902, Japan\\
E-mail: atsuo@gokutan.c.u-tokyo.ac.jp}

\author{Masato Okado}
\address{Department of Mathematical Science\\
Graduate School of Engineering Science, Osaka University\\
Toyonaka, Osaka 560-8531, Japan\\
E-mail: okado@sigmath.es.osaka-u.ac.jp}

\maketitle

\begin{abstract}
Let $A_q(\mathfrak{g})$ be the quantized algebra of functions 
associated with simple Lie algebra $\mathfrak{g}$
defined by generators obeying the so called $RTT$ relations.
We describe the embedding $A_q(B_2) \hookrightarrow A_q(C_2)$ explicitly.
As an application, a new solution of the Isaev-Kulish 3D reflection equation
is constructed by combining the embedding with the previous solution for $A_q(C_2)$ 
by the authors.
\end{abstract}



\section{Introduction}\label{sec:1}
The reflection equation \cite{Ch84,Sk88} 
is a companion system of the Yang-Baxter equation \cite{Bax}
in 2 dimensional (2D) integrable systems with boundaries.
A 3D analogue of such a system is known by Isaev and Kulish \cite{IK} 
as the 3D reflection equation 
which accompanies the Zamolodchikov tetrahedron equation \cite{Zam80}.
In \cite{KO2}, the first solution of the 3D reflection equation was 
constructed by invoking the representation theory \cite{So1}
of the quantized algebra of functions $A_q(\mathfrak{g})$ \cite{D86, RTF} 
for $\mathfrak{g}$ of type $C$.
It was done by succeeding the approach in \cite{KV}, where 
the solution of the tetrahedron equation \cite{BS} 
was obtained (up to misprint)
as the intertwiner for $A_q(\mathfrak{g})$-modules with 
$\mathfrak{g}$ of type $A$.

In this paper we extend such results to $\mathfrak{g}$ of type $B$.
The algebra $A_q(B_n)$ is formulated following \cite{RTF}, and 
its fundamental representations $\pi_i (i=1,2,3)$ 
are presented for $n=3$.
According to the general theory \cite{So1},
one has the equivalence of the tensor products
$\pi_{121} \simeq \pi_{212}$ and 
$\pi_{2323} \simeq \pi_{3232}$ 
$(\pi_{i_1,  \ldots,  i_r } := \pi_{i_1} \otimes \cdots \otimes \pi_{i_r})$
according to the Coxeter relations 
$s_1s_2s_1 = s_2s_1s_2$ and $s_2s_3s_2s_3 = s_3s_2s_3s_2$
among the simple reflections in the Weyl group $W(B_3)$.
We determine their intertwiners $\mathscr{S}$ and $\mathscr{J}$ explicitly  
and show that they yield a new solution to the 
3D reflection equation, confirming a conjecture in \cite{KO2}.
In fact, our strategy is to attribute the analysis to 
the $A_q(C_3)$ case \cite{KO2}. 
The key to this is the embedding $A_q(B_2) \hookrightarrow A_q(C_2)$
in Theorem \ref{pr:embed}.
It enable us to bypass 
the complicated intertwining relation (\ref{psi1})
for type $B$ and to utilize the result on the type $C$ case.
The embedding originates in a formulation of 
$A_q(\mathfrak{g})$ by the $RTT$ relations \cite{RTF}
and the fact that the vector representation of $C_2$ is the spinor one for $B_2$.

The paper is organized as follows.
In Section \ref{sec:rep} we recall the Hopf algebra 
$A_q(B_n)$ following \cite{RTF}.
The embedding $A_q(B_2) \hookrightarrow A_q(C_2)$ in Theorem 
\ref{pr:embed} will be the key to the proof of our main Theorem \ref{th:main}.
In Section \ref{sec:fr} we present the fundamental representations 
$\pi(i=1,2,3)$ of $A_q(B_3)$.
The $\pi_2$ and $\pi_3$ 
corresponding to the subalgebra $A_q(B_2)$ actually come from 
the fundamental representations of 
$A_q(C_2)$ through the embedding (Remark \ref{re:sub}).  
In Section \ref{sec:F} we formulate the intertwiners along the scheme 
parallel to $A_q(C_3)$ \cite{KO2} and determine them.
In Section \ref{sec:3d} 
we explain that these intertwiners yield a new solution 
to the 3D reflection equation, which is distinct from the 
type $C$ case in \cite{KO2}. 

\section{Algebra $A_q(B_n)$}\label{sec:rep}

Let $N=2n+1$ with $n \in \Z_{\ge 2}$.
We define $A_q(B_n)$ following \cite{RTF}, where it was denoted by 
$\mathrm{Fun(SO}_q(N))$.
$A_q(B_n)$ is a Hopf algebra generated by 
$t_{ij} \,(1\le i,j\le N)$ 
with the relations 
\begin{align}
&R_{ij,mp}t_{mk}t_{pl} 
= t_{jp}t_{im}R_{mp,kl},
\;\;\,
C_{jk}C_{lm}t_{ij}t_{lk} 
= C_{ij}C_{kl}t_{kj}t_{lm} = \delta_{im},
\label{re1}
\end{align}
where the repeated indices are summed over
$\{1,2,\ldots, N\}$.
The structure constants are specified by 
$C_{ij}= \delta_{i,N+1-j}q^{\varrho_j}$ with 
$(\varrho_1,\ldots, \varrho_N)
= (2n-1,,\ldots, 3, 1,0,-1,-3,\ldots,-2n+1)$,
and 
$\sum_{i,j,m,l}R_{ij,ml}E_{im}\otimes E_{jl}
= q^2\lim_{x\rightarrow \infty}x^{-2}R(x)|_{k=q^{-2}}$,
where $R(x)$ is the quantum $R$ matrix \cite{Baz,J2} for 
the vector representation of $U_q(B^{(1)}_n)$ given in \cite[eq.(3.6)]{J2}.
The former one in (\ref{re1}) is the so called $RTT$ relation on the 
generators $T=(t_{ij})$.
The coproduct is given by
$\Delta(t_{ij}) = \sum_k t_{ik} \otimes t_{kj}$.
We omit the antipode and counit as they are not necessary in this paper.

Reflecting the equivalence $U_q(B_2) \simeq U_q(C_2)$,
the simplest case $A_q(B_2)$ is related to $A_q(C_2)$.
The latter was introduced in \cite{RTF}  and detailed in 
\cite[sec. 3.1]{KO2}.
where it was denoted by $A_q(\mathrm{Sp}_4)$.
The $A_q(C_2)$ is the Hopf algebra defined similarly to $A_q(B_2)$ 
by replacing the structure constants $R_{ij,mp}$ and $C_{jk}$
by those associated with the quantum $R$ matrix of the vector representation of
$U_q(C^{(1)}_2)$.
The generators $t_{ij} \,(1\le i,j \le 4)$ of $A_q(C_2)$ (\cite[sec. 3.1]{KO2}) 
will be denoted by 
$s_{ij}$ here for distinction. 
The vector representation of $U_q(C_2)$ is the spinor 
one in terms of $U_q(B_2)$.
This fact is reflected in the following.

\begin{theorem}\label{pr:embed}
There is an embedding of the Hopf algebra
$\iota: A_q(B_2) \hookrightarrow A_q(C_2)$ which 
maps the generators $\{t_{ij} \mid 1 \le i,j \le 5\}$ as follows:
\begin{align*}
t_{ij} &\mapsto (-1)^{\delta_{i 5}+\delta_{j 5}}(\sqrt{r})^{\delta_{i3}}
(s_{\xi_i \xi_j}s_{\eta_i \eta_j}-q s_{\xi_i \eta_j}s_{\eta_i \xi_j}) \quad (j\neq 3),\\
t_{i 3}&\mapsto (-1)^{\delta_{i5}}\sqrt{r}\, 
(-q^{-1}s_{\xi_i 1}s_{\eta_i 4} + q s_{\xi_i 4}s_{\eta_i 1})\quad (i \neq 3),\\
t_{33} &\mapsto s_{22}s_{33}-q s_{21}s_{34} + q s_{24}s_{31} - q^2 s_{23}s_{32},
\end{align*}
where $r= 1+q^2$ and $\xi_i, \eta_i$ are given by
$\left({\xi_1 \xi_2 \xi_3 \xi_4 \xi_5 \atop 
\eta_1 \eta_2 \eta_3 \eta_4 \eta_5}\right)=
\left({1 1 2 2 3\atop 2 3 3 4 4}\right)$.
\end{theorem}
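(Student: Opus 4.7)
The strategy is to interpret the map $\iota$ representation-theoretically and thereby reduce everything to identities already available in $A_q(C_2)$. Under the isomorphism $U_q(B_2)\simeq U_q(C_2)$, the $5$-dimensional vector representation $V_B$ of $B_2$ embeds into the $6$-dimensional antisymmetric square $\Lambda^2 V_C$ of the vector representation $V_C$ of $C_2$, with orthogonal complement the $1$-dimensional line spanned by the $q$-deformed symplectic form. The pairs $(\xi_i,\eta_i)$ with $i\neq 3$ label four of the natural basis vectors of $\Lambda^2 V_C$ directly, while $i=3$ corresponds to the combination of $e_1\wedge e_4$ and $e_2\wedge e_3$ that is orthogonal to the symplectic form; the factor $\sqrt{r}=\sqrt{1+q^2}$ is the $q$-normalization of this combination. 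Expressing the matrix coefficients of this subrepresentation through those of $\pi_C$ reproduces exactly the $\iota(t_{ij})$ of the theorem, the quantum $2\times 2$ minor $s_{ab}s_{cd}-qs_{ad}s_{cb}$ being the basic building block.

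With this interpretation fixed, I would check the defining relations of $A_q(B_2)$ in three steps. First, for the $RTT$ relation: matrix coefficients of any finite-dimensional $U_q$-module automatically satisfy $RTT$ with that module's $R$-matrix, so it is enough to identify $R^B$ restricted to $V_B\otimes V_B\subset (\Lambda^2 V_C)^{\otimes 2}$ with the fused $R$-matrix built from $R^C$ via the standard $B_2\simeq C_2$ fusion procedure. This identification can be pinned down by comparing eigenvalues on the three $U_q$-isotypic components of $V_B\otimes V_B$. Second, for the orthogonality relations: the $B_2$ invariant $C^B$ pulls back through the inclusion $V_B\hookrightarrow\Lambda^2 V_C$ to an expression in the symplectic invariant $C^C$, reducing $B$-orthogonality to an identity that follows from $C$-orthogonality together with the Laplace expansion of the quantum determinant $\det_q S=1$ in $A_q(C_2)$. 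Third, for injectivity: the map $\iota$ specializes at $q=1$ to the comorphism of the classical double cover $\mathrm{Sp}(4)\twoheadrightarrow\mathrm{SO}(5)$, which is surjective and hence induces an injection of coordinate rings, and flatness of the quantization transfers this injectivity to $\iota$.

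The bulk of the work, and the main obstacle, will lie in the bookkeeping for the exceptional index $3$. Producing the correct four-term formula for $\iota(t_{33})$ and verifying the $RTT$ relations that mix row or column $3$ with the others requires carefully balancing the $(1,4)$ and $(2,3)$ contributions against the symplectic invariant and fixing the correct $q$-normalization, which is exactly what the factor $\sqrt{r}$ tracks. By contrast, compatibility with the coproduct $\Delta(t_{ij})=\sum_k t_{ik}\otimes t_{kj}$ is automatic from the matrix-coefficient interpretation, so only the algebra structure — the $RTT$ and orthogonality relations — genuinely needs to be checked, and injectivity comes for free from the classical limit.
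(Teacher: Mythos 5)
Your proposal is a legitimately different route from the paper's. The paper offers no conceptual argument at all: its entire proof is the statement that all defining relations of $A_q(B_2)$ have been verified for the images $\iota(t_{ij})$ ``by tedious but direct calculations.'' You instead organize the map as the matrix of the $U_q$-comodule $\Lambda^2_q V_C$ restricted to the $5$-dimensional complement of the invariant symplectic bivector, which explains \emph{why} the quantum minors $s_{ab}s_{cd}-qs_{ad}s_{cb}$ appear, why index $3$ is exceptional, and where $\sqrt{r}=\sqrt{1+q^2}$ comes from; this is exactly the structural remark the paper only gestures at (``the vector representation of $C_2$ is the spinor one for $B_2$''). What your approach buys is a derivation of the formulas rather than a post hoc check, plus a cleaner treatment of the coproduct via the quantum Binet--Cauchy identity. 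What it costs is two inputs you currently treat as automatic. First, ``matrix coefficients automatically satisfy $RTT$'' is true in the dual $U_q(C_2)^{\circ}$, but $A_q(C_2)$ is defined here abstractly by generators and relations; to transport the conclusion back you need either the Peter--Weyl/cosemisimplicity theorem identifying $A_q(C_2)$ with matrix coefficients, or a direct fusion argument carried out inside the $RTT$ algebra (projecting $R^C_{12}T_1T_2=T_2T_1R^C_{12}$ by the $q$-antisymmetrizer), and in either case you must match the specific normalization $q^2\lim_{x\to\infty}x^{-2}R(x)$ of the type $B$ $R$-matrix on the nose, not merely up to the three isotypic eigenvalues. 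Second, the injectivity argument via the classical double cover $\mathrm{Sp}(4)\twoheadrightarrow\mathrm{SO}(5)$ and ``flatness of the quantization'' needs a freeness/PBW-type statement for the integral forms of both algebras before specialization at $q=1$ controls the generic kernel. Neither point is fatal, but both must be supplied for the argument to close; as written your proof, like the paper's, defers the decisive computations.
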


We have checked that under $\iota$, 
all the relations on $\{t_{ij}\}$ in $A_q(B_2)$ are 
guaranteed by those on $\{s_{ij}\}$ in $A_q(C_2)$ by tedious but direct 
calculations.
We note that in the different formulation of $A_q(\mathfrak{g})$ of \cite{Ka}, 
one has the isomorphism $A_q(B_2)\simeq A_q(C_2)$ by definition 
rather than the embedding.

\section{Fundamental representations of $A_q(B_3)$}\label{sec:fr}
Let $\mathrm{Osc}_q = 
\langle {{\bf 1}, \rm{\bf a}}^+, {\rm{\bf a}}^-, {\rm{\bf k}} \rangle$ 
be the $q$-oscillator algebra, i.e. 
an associative algebra with the center ${\bf 1}$ and the relations
\begin{equation*}
\ok \,\ap = q\,\ap \,\ok,\;\;
\ok\,\am = q^{-1}\am\,\ok,\;\;
\am \ap = {\bf 1}\!-\!q^2 \,\ok^2,\;\;
\ap \am = {\bf 1}\!-\!{\rm{\bf k}}^2.
\end{equation*}
It has a representation on 
the Fock space $\mathcal{F}_q = \oplus_{m\ge 0}{\mathbb C}(q)|m\rangle$:
\begin{align*}
{\bf 1}|m\rangle = |m\rangle,\;
\ok |m\rangle = q^m |m\rangle,\;
\ap |m\rangle = |m+1\rangle,\;
\am |m\rangle = (1-q^{2m})|m-1\rangle.
\end{align*}
We shall also use $\mathrm{Osc}_q$ with $q$ replaced by $q^2$.
For distinction it is denoted by 
$\mathrm{Osc}_{q^2} 
= \langle {{\bf 1}, \Ap, \Am, \OK} \rangle$, 
which acts on $\mathcal{F}_{q^2}$.

{}From now on we consider $A_q(B_3)$ which includes 
$A_q(B_2)$ as a subalgebra.
Set $(q_1,q_2,q_3)=(q^2,q^2,q)$, which reflects the 
squared root length of the simple roots of $B_3$.
Consider the maps 
$\pi_i \,(i\!=\!1,2,3):\, A_q(B_3) \rightarrow \mathrm{Osc}_{q_i}$
that send the generators $(t_{ij})_{1 \le i,j \le 7}$ 
to the following:
\begin{align}
&\pi_1:
\begin{pmatrix}
\mu_1\Am & \alpha_1\OK  &  & & & &\\
\beta_1 \OK & \nu_1\Ap &  & & & & \\
 & & \kappa_1{\bf 1} & & & &\\
 & & & \sigma_1{\bf 1}&  & & \\
 & & & &  \kappa^{-1}_1{\bf 1}& &\\
 &&&&& \nu_1^{-1}\Am & q^2\beta^{-1}_1\OK\\
&&&& & q^2\alpha^{-1}_1\OK & \mu^{-1}_1\Ap
 \end{pmatrix},\label{pi1}\\
&\pi_2:
\begin{pmatrix}
\kappa_2{\bf 1}  & & & & & &\\
&\mu_2\Am & \alpha_2\OK    & & & &\\
&\beta_2 \OK & \nu_2\Ap   & & & & \\
 & & & \sigma_2{\bf 1}&  & & \\
 &&&& \nu_2^{-1}\Am & q^2\beta^{-1}_2\OK&\\
&&& & q^2\alpha^{-1}_2\OK & \mu^{-1}_2\Ap&\\
 & & & &  & & \kappa^{-1}_2{\bf 1}
 \end{pmatrix}, \label{pi2}\\
&\pi_3:
\begin{pmatrix}
\kappa_{31}{\bf 1}  & & & & & &\\
& \kappa_{32}{\bf 1}  & & & & &\\
& & \mu_3 (\am)^2 & \alpha_3\ok \,\am & -(r\mu_3)^{-1}\alpha_3^2 \ok^2& &\\
& & -r\alpha^{-1}_3\mu_3 \am \, \ok  & \am\ap-\ok^2& 
-(q\mu_3)^{-1}\alpha_3 \ok\, \ap & &\\
& & -r(q\alpha^{-1}_3)^2\mu_3 \ok^2 & r\alpha^{-1}_3\ok \, \ap 
& \mu_3^{-1}(\ap)^2& &\\
& & & &  & \kappa^{-1}_{32}{\bf 1} & \\
& & & &  & & \kappa^{-1}_{31}{\bf 1}
 \end{pmatrix}, \label{pi3}
\end{align}
where $r$ is defined in Theorem \ref{pr:embed} and blanks mean $0$.
The symbols $\alpha_i, \beta_i, \mu_i, \nu_i, \sigma_i, \kappa_i, 
\kappa_{31}, \kappa_{32}$ are parameters.
They obey the constraint 
\begin{equation}\label{paracon}
\alpha_i \beta_i = -q^2 \mu_i \nu_i,\quad
\sigma_i = \pm 1\quad (i=1,2).
\end{equation}

\begin{proposition}\label{pr:piA}
The maps $\pi_i\,(i\!=\!1,2,3)$ are naturally extended to the 
algebra homomorphisms. 
The resulting representations of $A_q(B_3)$ 
on $\mathcal{F}_{q_i}$ are irreducible.
\end{proposition}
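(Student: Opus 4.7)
My plan is to verify separately that each $\pi_i$ respects the defining relations (\ref{re1}) of $A_q(B_3)$, and then to deduce irreducibility on $\mathcal{F}_{q_i}$ by a standard Fock-space argument.

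For $\pi_2$ and $\pi_3$ I would exploit Theorem \ref{pr:embed}. Inspection of (\ref{pi2}) and (\ref{pi3}) shows that $\pi_2$ acts as a scalar on the rows and columns indexed by $\{1,7\}$, while $\pi_3$ does so on $\{1,2,6,7\}$; in each case the remaining nontrivial action is confined to the inner $5\times 5$ block and hence factors through the sub-Hopf algebra $A_q(B_2)\subset A_q(B_3)$. Consequently, every $RTT$ relation involving at least one outer index collapses at once to a scalar identity that follows from the block structure of the $B_3$ $R$-matrix and the prescribed $C_{ij}$. Composing now with the embedding $\iota:A_q(B_2)\hookrightarrow A_q(C_2)$ of Theorem \ref{pr:embed} and comparing term-by-term with the two fundamental representations of $A_q(C_2)$ constructed in \cite[sec.~3.1]{KO2}, one sees that $\pi_2$ and $\pi_3$ are precisely the pullbacks under $\iota$ of those representations, after a suitable choice of parameters. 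This is the content of Remark \ref{re:sub}, and it transfers both the $RTT$ and the orthogonality relations from $A_q(C_2)$ to $A_q(B_3)$.

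For $\pi_1$ no embedding argument is available, so I would check the relations by direct substitution. The matrix $\pi_1(T)$ has non-scalar entries only in the $2\times 2$ corner blocks at rows/columns $(1,2)$ and $(6,7)$, with the middle three rows and columns carrying diagonal scalars; thus the $RTT$ relations split into scalar identities on the middle part, which are trivial, and a short list of $q$-oscillator identities on the corners. Using the commutation relations of $\mathrm{Osc}_{q^2}$ together with the parameter constraint (\ref{paracon}), the former reduce to a handful of elementary identities in the Fock space, and the orthogonality relations of (\ref{re1}) likewise collapse to standard completeness-type identities in $\mathrm{Osc}_{q^2}$.

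Finally, for the irreducibility on $\mathcal{F}_{q_i}$, the image of $\pi_i$ contains $\OK$ or $\ok$, which acts diagonally with distinct eigenvalues $q_i^m$ on the basis $\{|m\rangle\}_{m\ge 0}$, so any nonzero invariant subspace is spanned by a subset of basis vectors. Since the image also contains operators proportional to $\Ap,\Am$ (respectively $\ap,\am$, or $(\ap)^2,(\am)^2$ in the $\pi_3$ case) that both raise and lower between basis vectors, this subset must be all of $\Z_{\ge 0}$. I expect the main obstacle to be the combinatorial bulk of the $RTT$ verification for $\pi_3$, whose $3\times 3$ central block has entries quadratic in the oscillators; it is precisely here that the embedding reduction of Theorem \ref{pr:embed} is indispensable, since it bypasses the need to prove those relations directly in $A_q(B_3)$.
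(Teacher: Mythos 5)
The paper states Proposition \ref{pr:piA} without giving a proof (it is asserted, the verification being left as a direct computation in the spirit of \cite{KO2}), so your proposal can only be measured against the strategy the paper implies through Theorem \ref{pr:embed} and Remark \ref{re:sub}; on that score your route is essentially the intended one and is sound. Two points deserve sharpening. First, your claim that every $RTT$ relation involving an outer index ``collapses at once to a scalar identity'' is too quick: such a relation equates sums of products in which one factor is a scalar multiple of $\ichi$ and the other is a genuine oscillator entry, so it is an identity between oscillator-valued expressions whose coefficients come from the $R$- and $C$-matrix entries; these are easy but must actually be checked, and the same applies to the orthogonality relations in (\ref{re1}), whose sums run over all seven indices and therefore mix inner and outer generators even when $i,m$ are both inner. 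Second, the embedding argument only covers the relations among the inner generators; the clean logical order is to check on generators that the inner $5\times 5$ blocks of (\ref{pi2}) and (\ref{pi3}) equal $\pi^{C_2}_{3-i}\circ\iota$ (a finite computation, which is exactly Remark \ref{re:sub}), deduce the inner relations from the $A_q(C_2)$ case of \cite{KO2}, and then dispose of the outer-index relations separately as above. Your irreducibility argument is correct as stated: for generic $q$ the operator $\OK$ (resp.\ $\ok$) in the image has simple spectrum $\{q_i^m\}$ on $\mathcal{F}_{q_i}$, so any invariant subspace is spanned by basis vectors, and the image contains step-one raising and lowering operators --- $\Ap,\Am$ for $\pi_1,\pi_2$ and $\ok\,\ap$, $\ok\,\am$ (not merely $(\ap)^2,(\am)^2$) for $\pi_3$ --- none of which annihilate the relevant basis vectors, forcing the subspace to be all of $\mathcal{F}_{q_i}$.
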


We call $\pi_i\,(i=1,2,3)$ the {\em fundamental representations}
of $A_q(B_3)$.
It is easy to infer the fundamental representation $\pi_i$ of
$A_q(B_n)$ for general $n$.
\begin{remark}\label{re:sub}
The symbols $(t_{ij})_{2 \le i,j \le 6}$ generate $A_q(B_2)$.
The $5 \times 5$ submatrices of (\ref{pi2}) and (\ref{pi3}) 
without the first and the last rows and columns
provide the two fundamental representations of the $A_q(B_2)$.
Denote them by $\pi^{B_2}_1$ and $\pi^{B_2}_2$ respectively.
Similarly one can extract the fundamental representations of 
$A_q(C_2)$ from the $4\times 4$ submatrices of 
\cite[eq.(3.4), eq(3.5)]{KO2}.
Denote them by $\pi^{C_2}_1$ and $\pi^{C_2}_2$, respectively.
Then $\pi^{B_2}_i$ coincides with the representation induced from 
$\pi^{C_2}_{3-i}$ via Theorem \ref{pr:embed}
with a suitable adjustment of the parameters, i.e.
$\pi^{B_2}_i(f) = \pi^{C_2}_{3-i}(\iota(f))$ for any 
$f \in A_q(B_2)$.
\end{remark}

\section{Intertwiners}\label{sec:F}

The Weyl group $W(B_3) = \langle s_1, s_2, s_3\rangle$ is 
generated by the simple reflections with the relations
$
s_1^2=s_2^2=s_3^2=1,\, 
s_1s_3 = s_3s_1,\,s_1s_2s_1 = s_2s_1s_2$ and  
$s_2s_3s_2s_3 = s_3s_2s_3s_2$.
According to \cite{So1} one should have the equivalence
\begin{equation}\label{equi}
\pi_{13} \simeq \pi_{31},\quad \pi_{121} \simeq \pi_{212},\quad
\pi_{2323} \simeq \pi_{3232}
\end{equation}
under an appropriate tuning of the parameters.
Here and in what follows we use the 
shorthand $\pi_{i_1,  \ldots,  i_r } = \pi_{i_1} \otimes \cdots \otimes \pi_{i_r}$.
It is easy to see that $\pi_{13} \simeq \pi_{31}$ holds if and only if
\begin{equation}\label{par1}
\kappa_1 = \sigma_1,\quad \kappa_{31} = \kappa_{32}
\end{equation}
and the intertwiner is just the transposition
$P(x\otimes y) = y \otimes x$.
The equivalence $\pi_{121} \simeq \pi_{212}$ holds if and only if 
\begin{equation}\label{par2}
\kappa_1 = \kappa_2 = \sigma_2,\quad \alpha_1\beta_1 = \alpha_2\beta_2
\end{equation}
are further satisfied.
Assuming them we introduce the intertwiner
$\Phi  \in \mathrm{End}(\mathcal{F}_{q^2} 
\otimes \mathcal{F}_{q^2}\otimes \mathcal{F}_{q^2})$
characterized by
\begin{align*}
\pi_{212}(\Delta(f))\circ \Phi = \Phi \circ \pi_{121}(\Delta(f))
\quad (\forall f \in A_q(B_3))
\end{align*}
and the normalization
$\Phi (|0\rangle \otimes|0\rangle \otimes|0\rangle) 
=|0\rangle \otimes|0\rangle \otimes|0\rangle$.
Set $S=\Phi P_{13}$, where
$P_{13}(x\otimes y \otimes z) =z \otimes y \otimes x$.
The $S$ is regarded as a matrix 
$S = (S^{abc}_{ijk})$ whose elements are specified by
$
S(|i\rangle \otimes |j\rangle \otimes |k\rangle) = 
\sum_{a,b,c \in {\mathbb Z}_{\ge 0}} S^{a b c}_{i j k}
|a\rangle \otimes |b\rangle \otimes |c\rangle$.
We use the notation $\delta^a_i=1 (a=i), \;\delta^a_i=0 (a\neq i)$ and 
\begin{align*}
(q)_i = \prod_{j=1}^i(1-q^j),\quad
\left[i_1,\ldots, i_r \atop j_1, \ldots, j_s\right]_{\!q} =
\begin{cases} 
\frac{\prod_{k=1}^r(q)_{i_k}}{\prod_{k=1}^s(q)_{j_k}} & 
\forall i_k, j_k \in \Z_{\ge 0},\\
0 & \text{otherwise}.
\end{cases}
\end{align*}

The following result confirms the conjecture stated in 
\cite[eq.(4.2)]{KO2}.
\begin{theorem}\label{th:121}
Under (\ref{paracon}), (\ref{par1}) and (\ref{par2}), 
the following formulas are valid:
\begin{align*}
S^{abc}_{ijk} &= (-\alpha_1\beta_1q^{-2})^j 
\mu_1^{a-j+k}\mu_2^{b-a-k}\sigma_1^{b+j}
\mathscr{S}^{abc}_{ijk},\\
\mathscr{S}^{abc}_{ijk} &=\delta_{i+j}^{a+b}\delta_{j+k}^{b+c}
\!\!\sum_{\lambda+\mu=b}\!\!\!(-1)^\lambda
q^{2i(c-j)+2(k+1)\lambda+2\mu(\mu-k)}
\!\left[{i,j,c+\mu \atop \mu,\lambda,i-\mu,j-\lambda,c}\right]_{q^4}.
\end{align*}
\end{theorem}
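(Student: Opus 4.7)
The plan is to exploit the uniqueness of $\Phi$ and reduce as much of the intertwining check as possible to an $A_q(C_2)$ computation via the embedding $\iota$ of Theorem \ref{pr:embed}. By Proposition \ref{pr:piA} together with the general theory of \cite{So1}, the intertwiner $\Phi$ is unique up to a scalar, hence fully determined by the normalization $\Phi(|0\rangle\otimes|0\rangle\otimes|0\rangle) = |0\rangle\otimes|0\rangle\otimes|0\rangle$. It therefore suffices to verify (i) the normalization, which is immediate since only $\lambda=\mu=0$ contributes to $\mathscr{S}^{000}_{000}$ and the resulting quantity evaluates to $1$; and (ii) the intertwining identity $\pi_{212}(\Delta(t_{ij}))\,S P_{13} = S P_{13}\,\pi_{121}(\Delta(t_{ij}))$ for each generator $t_{ij}$ of $A_q(B_3)$.

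To organize (ii) I would split the generators of $A_q(B_3)$ into those lying in the subalgebra $A_q(B_2)$ generated by $\{t_{ij}\}_{2\le i,j\le 6}$, and the remaining \emph{outer} generators for which $i$ or $j$ equals $1$ or $7$. For the first group, Remark \ref{re:sub} identifies $\pi_k^{B_3}|_{A_q(B_2)}$ with $\pi_{3-k}^{C_2}\circ\iota$ after a suitable parameter adjustment, so $\pi_{121}^{B_3}|_{A_q(B_2)} = \pi_{212}^{C_2}\circ\iota$ and $\pi_{212}^{B_3}|_{A_q(B_2)} = \pi_{121}^{C_2}\circ\iota$. Consequently the intertwining relation for $t_{ij}\in A_q(B_2)$ is equivalent to the intertwining relation for $\iota(t_{ij}) \in A_q(C_2)$ between $\pi_{212}^{C_2}$ and $\pi_{121}^{C_2}$. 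This is precisely the relation underlying the conjecture \cite[eq.(4.2)]{KO2}, where the same combinatorial formula $\mathscr{S}^{abc}_{ijk}$ was proposed in the $A_q(C_3)$ setting. Verifying it reduces to a finite family of $q$-hypergeometric identities in the double sum over $(\lambda,\mu)$.

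For the outer generators I would proceed by direct inspection. The matrices (\ref{pi1}) and (\ref{pi2}) show that rows and columns labeled by $1$ and $7$ act by scalars under $\pi_2$ and by simple monomials in $\Am, \Ap, \OK$ under $\pi_1$, so each intertwining identity becomes a recurrence in the indices of $\mathscr{S}^{abc}_{ijk}$ that can be checked after routine bookkeeping. The number of independent checks is finite and can be cut down further using the $RTT$ relations, which tie the outer generators to one another through quadratic identities.

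The main obstacle is the combinatorial step of the $A_q(C_2)$ verification itself: one must show that the explicit double sum defining $\mathscr{S}^{abc}_{ijk}$ satisfies the intertwining relations for all $A_q(C_2)$ generators, a computation involving delicate cancellations between terms with shifted $(\lambda,\mu)$ and requiring $q$-Vandermonde-type summations. The conceptual payoff of the strategy is that the embedding $\iota$ lets us bypass the notoriously intricate type-$B$ intertwining relation (\ref{psi1}) entirely; without it, the generators coupling to the spinor (central) index would force a considerably more involved direct analysis.
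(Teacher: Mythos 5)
Your overall skeleton (irreducibility gives uniqueness of $\Phi$ up to the normalization, then check the intertwining relation generator by generator) is fine, but the central reduction step is aimed at the wrong rank--2 subsystem and would fail. The embedding $\iota: A_q(B_2)\hookrightarrow A_q(C_2)$ of Theorem \ref{pr:embed} and Remark \ref{re:sub} concern the subalgebra generated by $(t_{ij})_{2\le i,j\le 6}$, i.e.\ the $B_2$ subdiagram spanned by $\alpha_2,\alpha_3$, and the only representations it identifies are the restrictions of $\pi_2$ and $\pi_3$ (the $5\times5$ blocks of (\ref{pi2}) and (\ref{pi3})) with $\pi^{C_2}_{2}$ and $\pi^{C_2}_{1}$. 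The representation $\pi_1$ does not restrict to a fundamental representation of that subalgebra at all (its relevant block in (\ref{pi1}) is essentially diagonal), so the asserted identity $\pi_{121}^{B_3}|_{A_q(B_2)}=\pi_{212}^{C_2}\circ\iota$ is false. Worse, the proposed target of the reduction does not exist: in $A_q(C_2)$ the braid relation between $s_1$ and $s_2$ has length four, $s_1s_2s_1s_2=s_2s_1s_2s_1$, so there is no equivalence $\pi_{121}\simeq\pi_{212}$ of $A_q(C_2)$-modules and no three-index intertwiner to compare with. The embedding $\iota$ is the tool for Theorem \ref{th:main} (the four-index $\mathscr{J}$ for $\pi_{2323}\simeq\pi_{3232}$), not for Theorem \ref{th:121}.

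The correct structural observation, which the paper records immediately after the statement, is that $\alpha_1,\alpha_2$ span an $A_2$ subdiagram consisting of two long roots, whence $q_1=q_2=q^2$ and the $2\times2$ oscillator blocks of (\ref{pi1}) and (\ref{pi2}) reproduce the fundamental representations of $A_{q^2}(\mathrm{SL}_3)$. Consequently $\mathscr{S}$ must be the known 3D $R$ matrix of \cite{KV,BS}, i.e.\ $\mathscr{R}^{abc}_{ijk}$ of \cite[eq.(2.20)]{KO2} with $q$ replaced by $q^2$, and what remains is to fix the parameter-dependent prefactor and to check the intertwining relation on the generators outside the $SL_3$-type blocks (your ``outer generator'' step is the right kind of bookkeeping for that part, and the same remark applies to the rows and columns carrying the scalars $\kappa_1,\sigma_1,\kappa_2,\sigma_2$). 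If you replace your $B_2\to C_2$ reduction by this $A_2\to A_{q^2}(\mathrm{SL}_3)$ reduction, the argument goes through and matches the intent of the paper.
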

This $\mathscr{S}^{abc}_{ijk}$ is equal to 
 $\mathscr{R}^{abc}_{ijk}$ in \cite[eq.(2.20)]{KO2} with $q$ replaced by $q^2$.
Our notation here is taken consistently with \cite[eq.(4.2)]{KO2} .
This $\mathscr{S}$ coincides with the 3D $R$ matrix \cite{BS}
which follows as the intertwiner of $A_{q^2}(\mathrm{SL}_3)$-modules \cite{KV}.
See \cite[sec.2]{KO2} for an exposition.
It satisfies the tetrahedron equation 
\begin{equation}\label{tehat}
\mathscr{S}_{356}\mathscr{S}_{246}\mathscr{S}_{145}\mathscr{S}_{123}
=\mathscr{S}_{123}\mathscr{S}_{145}\mathscr{S}_{246}\mathscr{S}_{356}
\end{equation}
and $\mathscr{S} = \mathscr{S}^{-1}$, 
$\mathscr{S}_{ijk}=\mathscr{S}_{kji}$.

Let us turn to   
$\pi_{2323}\simeq \pi_{3232}$ in (\ref{equi}).
It holds if and only if 
\begin{align}\label{par3}
\kappa_{31} = \kappa_{32} = \pm 1,\quad 
\alpha_2\beta_2 = \pm q^2.
\end{align}
The constraints 
(\ref{paracon}), (\ref{par1}),  (\ref{par2}) and (\ref{par3}) 
are summarized as 
\begin{align*}
\sigma = \kappa_i=\sigma_i,\;\;
\rho = \kappa_{3i} ,\;\;
\alpha_i\beta_i = -\varepsilon q^2,\;\;
\mu_i\nu_i = \varepsilon\;\;\;(i=1,2)
\end{align*}
in terms of the three independent sign factors 
$\sigma, \rho,\varepsilon \in \{1,-1\}$.

Let 
$\Xi: \mathcal{F}_{q^2} \otimes \mathcal{F}_q\otimes 
\otimes \mathcal{F}_{q^2}\otimes \mathcal{F}_q
\rightarrow 
\mathcal{F}_q\otimes 
\otimes \mathcal{F}_{q^2}\otimes \mathcal{F}_q
\otimes \mathcal{F}_{q^2}$
be the intertwiner characterized by
\begin{align}
\pi_{3232}(\Delta(f))\circ \Xi = \Xi \circ \pi_{2323}(\Delta(f))
\quad (\forall f \in A_q(B_3))\label{psi1}
\end{align}
and the normalization
$\Xi (|0\rangle \otimes|0\rangle \otimes|0\rangle\otimes|0\rangle) 
=|0\rangle \otimes|0\rangle \otimes|0\rangle\otimes|0\rangle$.
Introduce further $J=\Xi \,P_{14}P_{23}
\in \mathrm{End}(\mathcal{F}_q\otimes 
\otimes \mathcal{F}_{q^2}\otimes \mathcal{F}_q
\otimes \mathcal{F}_{q^2})$, where
$P_{14}P_{23}(x\otimes y \otimes z \otimes w) 
=w \otimes z \otimes y \otimes x$.
The $J$ is regarded as a matrix 
$J = (J^{abcd}_{ijkl})$ whose elements are specified by
\begin{equation*}
J(|i\rangle \otimes |j\rangle \otimes |k\rangle\otimes |l\rangle) = 
\sum_{a,b,c,d \in {\mathbb Z}_{\ge 0}} J^{a b c d}_{i j k l}
|a\rangle \otimes |b\rangle \otimes |c\rangle\otimes |d\rangle.
\end{equation*}
Setting 
$
J^{a b c d}_{i j k l} = \varepsilon^{b+i+l}\rho^{b+c}
(\sigma \mu_2)^{c-k}\mu_3^{b-j}\mathscr{J}^{a b c d}_{i j k l}$,
it can easily be checked that 
$\mathscr{J}^{a b c d}_{i j k l}$ depends only on $q$.
In this sense we call 
$\mathscr{J} = (\mathscr{J}^{a b c d}_{i j k l})$ 
the parameter-free part of the intertwiner for 
$A_q(B_3)$.

Let $\mathscr{K} = (\mathscr{K}^{abcd}_{ijkl})$ be
the parameter-free part in the similar sense of the intertwiner for the
quantized function algebra $A_q(C_3)$.
The elements $\mathscr{K}^{abcd}_{ijkl}$ are polynomials in $q$ and given 
explicitly in \cite[Th.3.4]{KO2}.
The $\mathscr{J}$ and $\mathscr{K}$ will simply be referred to as 
type $B$ and type $C$, respectively.
Now we present the main result of the paper,
which confirms the conjecture in \cite[eq.(4.1)]{KO2}.

\begin{theorem}\label{th:main}
The type $B$ and $C$ intertwiners are related by the transposition of the components as 
$\mathscr{J} = P_{14}P_{23}\mathscr{K}P_{14}P_{23}$, i.e.
$
\mathscr{J}^{a b c d}_{i j k l} = \mathscr{K}^{d c b a}_{l k j i}
$
holds.
\end{theorem}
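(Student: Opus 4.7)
The plan is to show that the operator $\widetilde{J} := P_{14}P_{23}\mathscr{K}P_{14}P_{23}$, dressed with the same parameter prefactors that reconstruct $J$ from $\mathscr{J}$, satisfies the defining intertwining relation (\ref{psi1}) with the correct normalization. Uniqueness of intertwiners between the relevant irreducible tensor products then forces $\widetilde{J} = J$ and gives $\mathscr{J}^{abcd}_{ijkl} = \mathscr{K}^{dcba}_{lkji}$. The transposition $P_{14}P_{23}$ is natural in this setup because the Fock pattern $(\mathcal{F}_{q^2},\mathcal{F}_q,\mathcal{F}_{q^2},\mathcal{F}_q)$ appearing in type $B$ is reversed in type $C$, mirroring the role swap $\pi^{B_2}_i \leftrightarrow \pi^{C_2}_{3-i}$ from Remark \ref{re:sub}.

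The first step is a reduction to the subalgebra $A_q(B_2) = \langle t_{ij} : 2 \le i,j \le 6 \rangle$. For generators $t_{ij}$ with $i$ or $j$ in $\{1,7\}$, inspection of (\ref{pi2}) and (\ref{pi3}) shows that $\pi_2(t_{ij})$ and $\pi_3(t_{ij})$ are supported only on the diagonal where they equal scalars $\kappa_2^{\pm 1}$ or $\kappa_{3k}^{\pm 1}$. A short computation with $\Delta^3(t_{ij}) = \sum t_{ik_1} \otimes t_{k_1 k_2} \otimes t_{k_2 k_3} \otimes t_{k_3 j}$ then shows that $\pi_{2323}(\Delta^3(t_{ij}))$ and $\pi_{3232}(\Delta^3(t_{ij}))$ either vanish or are equal scalar multiples of the identity (governed by (\ref{par1})), so (\ref{psi1}) is automatic on such generators. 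Only $f \in A_q(B_2)$ yields nontrivial constraints on $\Xi$.

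The second step applies Theorem \ref{pr:embed} and Remark \ref{re:sub}: one has $\pi_2^{B_3}|_{A_q(B_2)} = \pi_2^{C_2}\circ\iota$ and $\pi_3^{B_3}|_{A_q(B_2)} = \pi_1^{C_2}\circ\iota$, with matching Fock spaces. Because $\iota$ is a Hopf-algebra embedding, (\ref{psi1}) restricted to $f \in A_q(B_2)$ becomes
\[
\pi_{1212}^{C_2}(\Delta^3(\iota(f))) \circ \Xi = \Xi \circ \pi_{2121}^{C_2}(\Delta^3(\iota(f))).
\]
Conjugating by $P_{14}P_{23}$ swaps the tensor order of the $A_q(C_2)$-factors and converts this into the restriction of the type $C_3$ intertwining relation satisfied by $J^{C_3}$ (whose parameter-free part is $\mathscr{K}$ by \cite{KO2}) to the subalgebra $A_q(C_2) \subset A_q(C_3)$, composed with $\iota$. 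Thus $\widetilde{J}$ solves the same reduced system as $J$; combined with the trivial first step, $\widetilde{J}$ solves (\ref{psi1}) in full and shares the normalization on $|0\rangle^{\otimes 4}$, so by the Soibelman uniqueness \cite{So1} (applicable via Proposition \ref{pr:piA}) we conclude $\widetilde{J} = J$.

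The main anticipated obstacle is the bookkeeping of the parameter prefactors, since the definitions of $\mathscr{J}$ and $\mathscr{K}$ factor out different combinations of $\sigma, \rho, \varepsilon, \mu_i, \alpha_i, \beta_i$: to deduce $\mathscr{J}^{abcd}_{ijkl} = \mathscr{K}^{dcba}_{lkji}$ cleanly, one must track the parameter dictionary induced by Theorem \ref{pr:embed} and confirm that after conjugation by $P_{14}P_{23}$ the two normalizations coincide without residual scalars. A secondary concern is that $\iota(A_q(B_2))$ is only a proper subalgebra of $A_q(C_2)$; this is circumvented by arguing uniqueness on the $A_q(B_3)$ side itself, where Schur's lemma applies unconditionally, rather than attempting to promote the intertwining to all of $A_q(C_2)$.
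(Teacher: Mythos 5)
Your proposal is correct and follows essentially the same route as the paper: the paper's (two-line) proof likewise rests on the observation that $\mathscr{J}$ and $\mathscr{K}$ are already characterized as intertwiners for the subalgebras $A_q(B_2)$ and $A_q(C_2)$, and then invokes Theorem \ref{pr:embed} together with the identification $\pi^{B_2}_i = \pi^{C_2}_{3-i}\circ\iota$ of Remark \ref{re:sub}. Your write-up simply makes explicit the three ingredients the paper leaves implicit — the triviality of (\ref{psi1}) on generators with an index in $\{1,7\}$, the transfer of the intertwining relation through $\iota$ after conjugation by $P_{14}P_{23}$, and the Schur-type uniqueness on the $A_q(B_3)$ side — all of which are handled correctly.
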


\begin{proof}
The $\mathscr{J}$ and $\mathscr{K}$ are characterized as the
intertwiners of the representations of the subalgebras $A_q(B_2)$ and $A_q(C_2)$,
respectively.  
Thus the assertion follows from Theorem \ref{pr:embed} and 
Remark \ref{re:sub}.
\end{proof}

As a corollary of \cite[Th.3.5]{KO2}, 
$\mathscr{J}^{a b c d}_{i j k l}$ is a polynomial in $q^2$ 
vanishing unless $(a+2b+c, b+c+d)=(i+2j+k, j+k+l)$.
Moreover $\mathscr{J}$ has the birational and combinatorial counterparts
as shown in \cite[Table 1]{KO2}.

\begin{example}
The following is the list of all the nonzero $\mathscr{J}^{1102}_{ijkl}$.
\begin{alignat*}{2}
\mathscr{J}^{1102}_{3003} &= q^8 (1 - q^6)(1 - q^{12}), \qquad &
\mathscr{J}^{1102}_{2012} &=q^4(1 - q^4) (1 - q^2 + q^4 - q^6 - q^{10}),\\
\mathscr{J}^{1102}_{1021} &=-q^6(1-q^6), &
\mathscr{J}^{1102}_{0030} &=-q^2(1-q^6),\\
\mathscr{J}^{1102}_{1102} &=q^2 (1 - q^8 + q^{14}), &
\mathscr{J}^{1102}_{0111} &=1 - q^4 + q^{10}.
\end{alignat*}
\end{example}

\section{3D reflection equation}\label{sec:3d}
The intertwiners $\mathscr{S}$ and $\mathscr{J}$ 
yield a solution of the 3D reflection equation proposed by 
Isaev and Kulish \cite{IK}.
Since its derivation is the same as the type $C$ case in 
\cite[sec.3.6]{KO2}, we shall only describe the outline. 

Let $w_0$ be the longest element of the Weyl group $W(B_3)$ 
and consider the two reduced expressions
$w_0 = s_1s_2s_3s_2s_1s_2s_3s_2s_3 
= s_3s_2s_3s_2s_1s_2s_3s_2s_1$.
According to \cite[Th. 5.7, Cor. 1]{So1}, we have the equivalence 
$\pi_{123212323} \simeq \pi_{323212321}$ of the 
representations of $A_q(B_3)$.
The intertwiner for this can be constructed by 
composing the `basic ones' $P, \Phi$ and $\Xi$ for (\ref{equi}).
Moreover there are two ways to do so, the results of which must coincide
since $\pi_{123212323} \simeq \pi_{323212321}$ is irreducible.
This postulate leads to the equality
\begin{align}\label{3dref}
\mathscr{S}_{456}\mathscr{S}_{489}
\mathscr{J}_{3579}\mathscr{S}_{269}\mathscr{S}_{258}
\mathscr{J}_{1678}\mathscr{J}_{1234}
=\mathscr{J}_{1234}
\mathscr{J}_{1678}
\mathscr{S}_{258}\mathscr{S}_{269}
\mathscr{J}_{3579}\mathscr{S}_{489}\mathscr{S}_{456}
\end{align}
in $\mathrm{End}(\mathcal{F}_{q}\otimes \mathcal{F}_{q^2}
\otimes \mathcal{F}_{q}\otimes \mathcal{F}_{q^2}
\otimes \mathcal{F}_{q^2}\otimes \mathcal{F}_{q^2}
\otimes \mathcal{F}_{q}\otimes \mathcal{F}_{q^2}
\otimes \mathcal{F}_{q^2})$,
where the indices 
signify the positions of the tensor components on which the 
intertwiners act non trivially.
The equation (\ref{3dref}) is the 3D reflection (or 
``tetrahedron reflection") equation introduced in \cite{IK}
without a spectral parameter.
In view of Theorem \ref{th:main},
one may substitute $\mathscr{J}_{ijkl} = \mathscr{K}_{lkji}$.
The resulting relation confirms the conjecture stated in \cite[eq.(4.3)]{KO2}.

A pair $(\mathscr{S}, \mathscr{J})$ is called a solution to the 3D reflection 
equation if it satisfies (\ref{3dref}) and $\mathscr{S}$ satisfies 
the tetrahedron equation (\ref{tehat}) by itself.
Thus we have established a new solution of the 3D reflection equation
distinct from the 
previous one $(\mathscr{R}, \mathscr{K})$ in \cite{KO2}.
The two solutions $(\mathscr{S}, \mathscr{J})$
and $(\mathscr{R}, \mathscr{K})$ are associated with 
$A_q(B_3)$ and $A_q(C_3)$, respectively.
Although they are simply related as
$(\mathscr{S}, \mathscr{J}) = 
(\mathscr{R}|_{q\rightarrow q^2}, P_{14}P_{23}\mathscr{K}P_{14}P_{23})$,
the fact that these replacements do keep 
(\ref{3dref}) is highly nontrivial.
The main achievement of this paper is to have established it 
by exploiting the embedding in Theorem \ref{pr:embed}. 

Let us finish by giving a diagram for the 3D reflection equation
\begin{align}\label{zueq}
\mathscr{S}_{489}
\mathscr{J}_{3579}\mathscr{S}_{269}\mathscr{S}_{258}
\mathscr{J}_{1678}\mathscr{J}_{1234}\mathscr{S}_{654}
=\mathscr{S}_{654}\mathscr{J}_{1234}
\mathscr{J}_{1678}
\mathscr{S}_{258}\mathscr{S}_{269}
\mathscr{J}_{3579}\mathscr{S}_{489}.
\end{align}
This is $\mathscr{S}_{456}^{-1}\times {\rm eq. }(\ref{3dref})
\times \mathscr{S}_{456}$
with the properties mentioned after (\ref{tehat}) applied.
It is depicted in Fig.1.

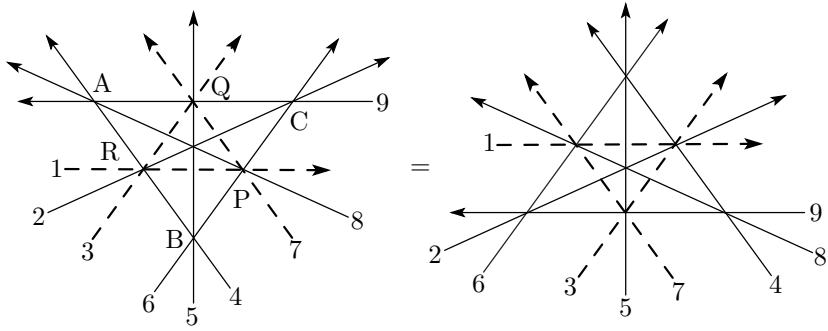
\begin{figure}[h]
\unitlength 0.1in
\begin{picture}(9.6000, 17)(24,-23.0)
%
\special{pn 8}%
\special{pa 3246 1974}%
\special{pa 4206 654}%
\special{fp}%
\special{sh 1}%
\special{pa 4206 654}%
\special{pa 4150 696}%
\special{pa 4174 698}%
\special{pa 4182 720}%
\special{pa 4206 654}%
\special{fp}%
%
\special{pn 13}%
\special{pa 4236 2004}%
\special{pa 3468 948}%
\special{da 0.070}%
\special{sh 1}%
\special{pa 3468 948}%
\special{pa 3490 1014}%
\special{pa 3498 992}%
\special{pa 3522 990}%
\special{pa 3468 948}%
\special{fp}%
%
\special{pn 13}%
\special{pa 3744 1998}%
\special{pa 4512 942}%
\special{da 0.070}%
\special{sh 1}%
\special{pa 4512 942}%
\special{pa 4456 984}%
\special{pa 4480 986}%
\special{pa 4488 1008}%
\special{pa 4512 942}%
\special{fp}%
%
\special{pn 13}%
\special{pa 3318 1308}%
\special{pa 4690 1302}%
\special{da 0.070}%
\special{sh 1}%
\special{pa 4690 1302}%
\special{pa 4624 1282}%
\special{pa 4638 1302}%
\special{pa 4624 1322}%
\special{pa 4690 1302}%
\special{fp}%
%
\special{pn 8}%
\special{pa 4926 1662}%
\special{pa 3078 1662}%
\special{fp}%
\special{sh 1}%
\special{pa 3078 1662}%
\special{pa 3144 1682}%
\special{pa 3130 1662}%
\special{pa 3144 1642}%
\special{pa 3078 1662}%
\special{fp}%
%
\special{pn 8}%
\special{pa 4752 1992}%
\special{pa 3792 672}%
\special{fp}%
\special{sh 1}%
\special{pa 3792 672}%
\special{pa 3814 738}%
\special{pa 3822 716}%
\special{pa 3846 714}%
\special{pa 3792 672}%
\special{fp}%
%
\special{pn 8}%
\special{pa 3990 2094}%
\special{pa 3990 570}%
\special{fp}%
\special{sh 1}%
\special{pa 3990 570}%
\special{pa 3970 638}%
\special{pa 3990 624}%
\special{pa 4010 638}%
\special{pa 3990 570}%
\special{fp}%
%
\special{pn 8}%
\special{pa 4968 1872}%
\special{pa 3186 1062}%
\special{fp}%
\special{sh 1}%
\special{pa 3186 1062}%
\special{pa 3238 1108}%
\special{pa 3234 1084}%
\special{pa 3254 1072}%
\special{pa 3186 1062}%
\special{fp}%
%
\special{pn 8}%
\special{pa 3042 1860}%
\special{pa 4824 1050}%
\special{fp}%
\special{sh 1}%
\special{pa 4824 1050}%
\special{pa 4754 1060}%
\special{pa 4774 1072}%
\special{pa 4772 1096}%
\special{pa 4824 1050}%
\special{fp}%
\put(32.7500,-12.9600){\makebox(0,0){1}}%
\put(29.9300,-18.8400){\makebox(0,0){2}}%
\put(37.0100,-20.4600){\makebox(0,0){3}}%
\put(47.7500,-20.4000){\makebox(0,0){4}}%
\put(39.8900,-21.6000){\makebox(0,0){5}}%
\put(32.2100,-20.3400){\makebox(0,0){6}}%
\put(42.6500,-20.7600){\makebox(0,0){7}}%
\put(50.0900,-18.9000){\makebox(0,0){8}}%
\put(49.8500,-16.6200){\makebox(0,0){9}}%
%
\special{pn 8}%
\special{pa 1524 2076}%
\special{pa 2484 756}%
\special{fp}%
\special{sh 1}%
\special{pa 2484 756}%
\special{pa 2428 798}%
\special{pa 2452 800}%
\special{pa 2460 822}%
\special{pa 2484 756}%
\special{fp}%
%
\special{pn 8}%
\special{pa 966 1662}%
\special{pa 2748 852}%
\special{fp}%
\special{sh 1}%
\special{pa 2748 852}%
\special{pa 2678 862}%
\special{pa 2698 874}%
\special{pa 2696 898}%
\special{pa 2748 852}%
\special{fp}%
%
\special{pn 8}%
\special{pa 2538 1686}%
\special{pa 756 876}%
\special{fp}%
\special{sh 1}%
\special{pa 756 876}%
\special{pa 808 922}%
\special{pa 804 898}%
\special{pa 824 886}%
\special{pa 756 876}%
\special{fp}%
%
\special{pn 8}%
\special{pa 1728 2136}%
\special{pa 1728 612}%
\special{fp}%
\special{sh 1}%
\special{pa 1728 612}%
\special{pa 1708 680}%
\special{pa 1728 666}%
\special{pa 1748 680}%
\special{pa 1728 612}%
\special{fp}%
%
\special{pn 8}%
\special{pa 1920 2058}%
\special{pa 960 738}%
\special{fp}%
\special{sh 1}%
\special{pa 960 738}%
\special{pa 982 804}%
\special{pa 990 782}%
\special{pa 1014 780}%
\special{pa 960 738}%
\special{fp}%
%
\special{pn 13}%
\special{pa 1200 1800}%
\special{pa 1968 744}%
\special{da 0.070}%
\special{sh 1}%
\special{pa 1968 744}%
\special{pa 1912 786}%
\special{pa 1936 788}%
\special{pa 1944 810}%
\special{pa 1968 744}%
\special{fp}%
%
\special{pn 13}%
\special{pa 2244 1794}%
\special{pa 1476 738}%
\special{da 0.070}%
\special{sh 1}%
\special{pa 1476 738}%
\special{pa 1498 804}%
\special{pa 1506 782}%
\special{pa 1530 780}%
\special{pa 1476 738}%
\special{fp}%
%
\special{pn 8}%
\special{pa 2664 1080}%
\special{pa 816 1080}%
\special{fp}%
\special{sh 1}%
\special{pa 816 1080}%
\special{pa 882 1100}%
\special{pa 868 1080}%
\special{pa 882 1060}%
\special{pa 816 1080}%
\special{fp}%
%
\special{pn 13}%
\special{pa 1056 1434}%
\special{pa 2428 1440}%
\special{da 0.070}%
\special{sh 1}%
\special{pa 2428 1440}%
\special{pa 2362 1420}%
\special{pa 2376 1440}%
\special{pa 2362 1460}%
\special{pa 2428 1440}%
\special{fp}%
\put(10.1300,-14.2800){\makebox(0,0){1}}%
\put(9.1700,-16.8000){\makebox(0,0){2}}%
\put(11.7500,-18.6600){\makebox(0,0){3}}%
\put(19.4900,-21.0000){\makebox(0,0){4}}%
\put(17.2100,-22.0800){\makebox(0,0){5}}%
\put(14.9300,-21.3600){\makebox(0,0){6}}%
\put(22.6100,-18.5400){\makebox(0,0){7}}%
\put(25.8500,-17.0400){\makebox(0,0){8}}%
\put(27.1700,-10.8600){\makebox(0,0){9}}%
\put(29.1500,-14.3400){\makebox(0,0){$=$}}%
\put(12.5000,-9.8000){\makebox(0,0){A}}%
\put(16.3000,-18.0000){\makebox(0,0){B}}%
\put(22.8000,-11.8000){\makebox(0,0){C}}%
\put(19.8000,-15.9000){\makebox(0,0){P}}%
\put(18.7000,-10.1000){\makebox(0,0){Q}}%
\put(12.9000,-13.4000){\makebox(0,0){R}}%
\end{picture}%
\caption{The 2D projection diagram of the 3D reflection equation (\ref{zueq}).}
\end{figure}
\noindent
To draw the LHS, 
first put three lines 8, 5 and 2 that intersect at one point,  the center.
Take generic points A, B and C  on every second 
half-infinite line emanating from the center along 8, 5 and 2.
The points P, Q and R are the crossings of the triangle 
ABC with the lines 8, 5 and 2.
The broken lines are formed by connecting P, Q and R. 
The RHS is obtained by changing A, B and C in such a way that
the lines 6, 9  and 4 are shifted and the triangle ABC is reversed.

There is a natural geometric interpretation of Fig.1.
Let us call the plane containing all the lines 
the {\em boundary plane}.
Fig. 1 shows the projection of the three planes reflected by the boundary plane.
A physical interpretation is the world sheets of three 
straight strings in 3D exhibiting boundary reflections.

The intersections of the three and four arrows correspond to $\mathscr{S}$ and 
$\mathscr{J}$  in (\ref{zueq}), respectively.  
The indices $1,3$ and $7$ are attached only to $\mathscr{J}$,
hence represent the boundary degrees of freedom.
In Fig.1 they are depicted by the broken lines.
The other solid lines are the projection of the 
configurations (or events) in 3D explained in what follows
onto the boundary plane.

To each broken line, associate a pair of half-planes in 3D
sharing the broken line as the common boundary.
They represent the world sheet of a straight string moving in 3D
and reflected by the boundary plane exactly at the broken line.
The six in total half-planes should be located 
on the same side of the boundary plane. 
A broken line is like a spine of an open (by a generic angle) book and the 
half-planes are like the front and back of the book.
The books should be ``upright" on the boundary plane since
the incident and reflection angles coincide.
There are three such books with spines $1,3$ and $7$.
At the intersection R of the spines $1$ and $3$,
the fronts and backs of the two books generate 
the four intersecting half-lines on them.
Their projections are the solid lines $2$ and $4$.
The crossing of these four lines $1, 2, 3, 4$ yields $\mathscr{J}_{1234}$.
The other $\mathscr{J}$ can be understood similarly.
The intersections of three planes without a spine 
take place off the boundary plane and 
correspond to  three string scatterings 
encoded by $\mathscr{S}$.
By an elementary geometry one can prove that 
the 2D projection of the 3D configuration explained so far 
exactly forms the pattern depicted in Fig.1.
Finally the orientations of the lines indicated by the arrows 
signify the time ordering of the scattering and reflection events.
They specify the order of the products of various
$\mathscr{S}$ and $\mathscr{J}$ in (\ref{zueq}).
\section*{Acknowledgments}
The authors thank the organizers of 
The XXIX International Colloquium on Group-Theoretical Methods in Physics,
August 20-26, 2012 at Chern Institute of Mathematics, Tianjin, China
for invitation and warm hospitality.
This work is supported by Grants-in-Aid for
Scientific Research No.~23340007, No.~24540203 and 
No.~23654007 from JSPS.

\bibliographystyle{ws-procs9x6}

\end{document}